\documentclass[10pt,tightenlines,eqsecnum,floats,aps,amsmath,
amssymb,nofootinbib,prd,showpacs]{revtex4-1}
\usepackage{amsmath,amsthm,latexsym,amssymb,amsfonts}
\usepackage{enumerate}
\usepackage{graphicx}
\usepackage{calc}
\usepackage[usenames,dvipsnames]{color}
\usepackage{colordvi}
\usepackage{vmargin}
\newcommand{\Z}{{\mathbb Z}}
\newcommand{\R}{{\mathbb R}}
\newcommand{\C}{{\mathbb C}}

\newcommand{\N}{{\mathbb N}}
\newcommand{\omicron}{{\cal I}}
\newcommand{\Hil}{{\mathcal H}}
\newcommand{\clpn}[1]{\left\lbrack #1\right\rbrack}
\newtheorem{lm}{Lemma}
\newtheorem{thm}[lm]{Theorem}
\newtheorem{df}[lm]{Definition}

\newcommand{\gora}[7] 
{
  \put(0,0){\qbezier(#1,#2)(#1,#4)(#3,#4)}
\put(0,0){\qbezier(#3,#4)(#5,#4)(#5,#6)}
\put(#3,#4){#7}
}
\newcommand{\pudla} 
{
\newsavebox{\pudlo}
\savebox{\pudlo}(2,1)[bl]{
\put(0,0){\line(1,0){1.5}}
\put(0,0){\line(0,1){0.8}}
\put(1.5,0){\line(0,1){0.8}}
\put(0,0.8){\line(1,0){1.5}}
\put(0.5,0.2){$P$}}
}

\begin{document}

\title{The EPRL intertwiners and corrected partition function} 

\author{Wojciech Kami\'nski, Marcin Kisielowski, Jerzy Lewandowski}

\affiliation{Instytut Fizyki Teoretycznej, Uniwersytet Warszawski,
ul. Ho{\.z}a 69, 00-681 Warszawa (Warsaw), Polska (Poland)\\
}

\begin{abstract} \noindent{\bf Abstract\ }
Do the SU(2) intertwiners parametrize the space of the EPRL solutions
to the simplicity constraint? What is a complete form of the partition
function written in terms of this parametrization? We prove that the
EPRL map is injective for n-valent vertex in case when 
it is a map from $SO(3)$ into $SO(3)\times SO(3)$ representations. We find, however, that the EPRL
map is not isometric. In the consequence, a partition function can be defined
either using the EPRL intertwiners Hilbert product or the SU(2) intertwiners
Hilbert product. We use the EPRL one and derive a new, complete formula for the
partition function. Next, we view it in terms of the SU(2) intertwiners. The
result, however, goes beyond the SU(2) spin-foam models
framework and the original EPRL proposal.
\end{abstract}

\pacs{
  }
\maketitle

\def\be{\begin{equation}}
\def\ee{\end{equation}}
\def\ba{\begin{eqnarray}}
\def\ea{\end{eqnarray}}
\def\lp{{\ell}_{\rm Pl}}
\def\g{\gamma}
\section{Introduction} The main technical ingredient of the spin-foam models of
4-dimensional gravity is so called quantum simplicity constraint. Imposing suitably defined constraint on the domain of the (discrete) path integral turns the  SU(2)$\times$SU(2) 
(or SL(2,C)) BF theory  into the spin-foam model of the Euclidean (respectively, Lorentzian) gravity \cite{EPRL}. The formulation of the simplicity  constraint believed to be correct, or at least fitting gravity the best among the known approaches \cite{graviton}, is the one derived by Engle, Pereira, Rovelli, Livine (EPRL) \cite{EPRL} (and independently derived by Freidel and Krasnov \cite{FK}).   
The solutions to the EPRL simplicity constraint are EPRL SU(2)$\times$SU(2) intertwiners. They are defined by the EPRL transformation, which
maps each SU(2) intertwiner into a EPRL solution of the simplicity constraint.
An attempt is made in the literature \cite{EPRL} to parametrize the space 
of the EPRL solutions by the SU(2) intertwiners. The vertex amplitude and the partition function of the EPRL model seem
to written  in  in terms of that parametrization.    
The questions we raise and answer in this paper are: 
\begin{itemize}
\item Is the EPRL map injective,  doesn't it kill any SU(2) intertwiner?
\item Is the EPRL map isometric, does it preserve the scalar product between
      the SU(2) intertwiners? 
\item If not, what is a form of a partition function derived from
the SO(4) intertwiner Hilbert product  written
directly in terms of the  SU(2) intertwiners, the preimages of the
EPRL map?
\end{itemize}
We prove that the
EPRL map is injective for n-valent vertex in case when 
it is a map from $SO(3)$ into $SO(3)\times SO(3)$ representations. The full proof  of injectivity of EPRL map (without additional assumptions from this 
paper) in the case
$|\gamma| >1$ has already been provided in \cite{SFLQG}. In those cases, there are as many SU(2)$\times$SU(2) EPRL intertwiners as there are the SU(2) intertwiners. Owing to this result the  SU(2) intertwiners indeed can be used
to parametrize the space of the EPRL SU(2)$\times$SU(2)intertwiners.   
 However, we find the EPRL map  is not isometric. In consequence, there are two
inequivalent  definitions of the partition function. One possibility is to use a
basis in the EPRL intertwiners space orthonormal with respect to the SO(4)
representations. 
 And this is what we do in this paper. A second possibility, is to use the basis
obtained as the image of an orthonormal basis of the 
 SU(2) intertwiners under the EPRL map. The partition function derived  in \cite{EPRL} corresponds to the second choice, whereas the first one is ignored therein. The goal of this part of our paper  is pointing out the first possibility and deriving
the corresponding partition function.  After the derivation, we compare our
partition function with that of EPRL on a possibly simple example. We
conjecture, that the difference converges to zero for large spins.
 
To make the paper intelligible we start presentation of the new results with the
derivation of the partition function in Section \ref{problem}. The final formula
for our proposal for the partition function for the EPRL model is presented in
Section \ref{solution}. The lack of the isometricity   
of the EPRL map is illustrated on specific examples in Section \ref{example}. Finally,  the proof of the injectivity of the EPRL map takes all the  Section \ref{proof}.   

This work is written in terms of the EPRL framework \cite{EPRL} combined with our  
previous paper \cite{SFLQG} on the EPRL model.                
             
\section{Our proposal for a partition function of the EPRL model}          
\subsubsection{Partition functions for the spin-foam models of 4-gravity: definition}
Consider an oriented 2-complex $\kappa$ whose faces (2-cells) are labeled by $\rho$ with the irreducible representations of  $G=$SU(2)$\times$SU(2), 
$$ \kappa^{(2)}\ni f\ \mapsto \rho(f), $$
and denote by  ${\cal H}(f)$ the corresponding Hilbert space. 
For every edge (1-cell) $e$ we have the set/set of incoming/outgoing faces, that is the faces which contain $e$ and whose orientation agrees/disagrees with the orientation of $e$. We use them to define the Hilbert space
\be {\cal H}(e)\ =\ \bigotimes_{f\ {\rm incoming}}{\cal H}(f)\,\otimes\,\bigotimes_{f'\ {\rm outgoing}}{\cal H}(f')^*. \ee 
The extra data we use, is a subspace 
\be {\cal H}(e)^{{\rm SIMPLE}}\ \subset\ {\cal H}(e)\label{simple}\ee
defined by some constraints called the quantum simplicity constraints.
In this paper, starting from section below, we will be considering the subspace  proposed by Engle-Pereira-Rovelli-Livine. For the time being ${\cal H}(e)^{{\rm SIMPLE}}$ is any subspace  of the space of invariants of the 
representation $\bigotimes_{f\ {\rm incoming}}\rho(f)\,\otimes\,\bigotimes_{f'\ {\rm outgoing}}\rho(f')^*$,  
\be {\cal H}(e)^{\rm SIMPLE} \subset\ {\rm Inv}_{SU(2)\times SU(2)}\left(\bigotimes_{f\ {\rm incoming}}{\cal H}(f)\,\otimes\,\bigotimes_{f'\ {\rm outgoing}}{\cal H}(f')^*\right).\ee   
(The subspace ${\cal H}_{e}^{\rm SIMPLE}$ may be trivial  for generic representations  $\rho(f)$ and $\rho(f')$. Typically the simplicity constraints constrain also the
representations themselves.)
To every edge we assign the operator  of the orthogonal projection onto ${\cal H}(e)^{\rm SIMPLE}$,
\be P_e^{\rm SIMPLE}\ :\ \bigotimes_{f\ {\rm incoming}}{\cal H}(f)\,\otimes\,\bigotimes_{f'\ {\rm outgoing}}{\cal H}(f')^* \rightarrow\  \bigotimes_{f\ {\rm incoming}}{\cal H}(f)\,\otimes\,\bigotimes_{f'\ {\rm outgoing}}{\cal H}(f')^*\ee
Our index notation is as follows (we drop `SIMPLE' for simplicity)
\be (P_ev)^{A...}{}_{B...}\ =\ {P_e}^{A...B'...}_{A'...B...}v^{A'...}{}_{B'...}\ee
where the upper/lower indices of any vector $v\in \bigotimes_{f\ {\rm incoming}}{\cal H}(f)\,\otimes\,\bigotimes_{f'\ {\rm outgoing}}{\cal H}(f')^*$ correspond to
incoming/outgoing faces. In the operator $P_e$, for each face containing $e$, there are two indices, an upper and a lower one corresponding to the Hilbert space ${\cal H}(f)$. If $f$ is incoming (outgoing), then we assign the corresponding lower/upper index of $P_e$ to the beginning/end (end/beginning) of the edge. 
That rule is illustrated on Fig. \ref{contraction}.   
\begin{figure}[ht!]
  \centering
    \includegraphics[width=\textwidth]{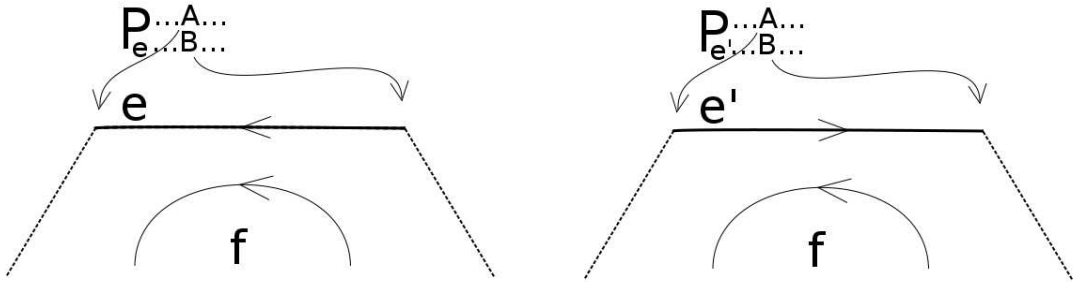}
\caption{According to this rule, given an edge $e$ ( $e'$ ) contained in incoming (outgoing) face $f$,
the indices of $P_e$ ( $P_{e'}$ ) corresponding to ${\cal H}(f)$ are assigned  to the
beginning and, respectively, to the end of the edge.  The oriented arc only marks the orientation of the polygonal face $f$.}
{\label{contraction}}
\end{figure}
Now, for every pair of edges $e$ and $e'$, which belong to a same face $f$, and share a vertex $v$, there is defined the natural contraction at $v$ of the corresponding vertex of $P_e$ with the corresponding vertex of  $P_{e'}$. The contraction
defines the following trace,
\be \bigotimes_{e\in\kappa^{(1)}}P_e^{\rm SIMPLE}\ \mapsto\ {\rm
Tr}\left(\bigotimes_{e\in\kappa^{(1)}}P_e^{\rm SIMPLE}\right).\label{tr} \ee 
 Define  partition  function  $Z(\kappa,\rho)$ to be the following
number:
\be Z(\kappa,\rho)\ :=\ \prod_{f\in \kappa^{(2)}}d(f)\,
{\rm Tr}\left(\bigotimes_{e\in\kappa^{(1)}}P_e^{\rm SIMPLE}\right)\,{\cal
A}({\rm boundary}),\ \ \ \ \ d(f)\ :=\ {\rm dim}{\cal H}(f)\label{ZP}\ee
 where ${\cal A}({\rm boundary})$ is a factor that depends only
on the boundary of $(\kappa,\rho)$, and we derive it elsewhere.
\subsubsection{Partition functions for the spin-foam models of 4-gravity: the amplitude form}
The partition function is usually rewritten in the spin-foam amplitude form \cite{RR, Baezintro, perez}. 
For that purpose one needs an orthonormal basis in each  Hilbert space ${\cal H}(e)^{\rm SIMPLE}$; denote its elements  by  
$\iota_{e,\alpha} \in {\cal H}(e)^{\rm SIMPLE}$,  $\alpha = 1,2,...,n(e)$. Then 
\be P_e^{\rm SIMPLE}\ =\ \sum_{\alpha=1}^{n_e}\iota_{e,\alpha}\otimes\iota^\dagger_{e,\alpha}\label{unity} \ee 
where by $`\dagger'$, for every Hilbert space ${\cal H}$ we denote the duality map
$$  {\cal H}\ni v\ \mapsto\ v^\dagger\in {\cal H}^*$$
defined by the Hilbert scalar product. In the Dirac notation 
$$\iota_{e,\alpha}\ =\ |e,\alpha\rangle, \ \ {\rm and}\ \ \iota^\dagger_{e,\alpha}\ =\ \langle e,\alpha|. $$
Substituting  the right hand side of (\ref{unity}) for each $P_e^{\rm SIMPLE}$ in (\ref{ZP}),  one writes the partition function in terms of the vertex amplitudes
 in the following way:
\begin{itemize}
\item For each edge of $\kappa$ choose an element  of the corresponding orthonormal basis; denote this assignment by
\be \iota\ :\ e\ \rightarrow\ \iota_{e,\alpha_e}\label{iota}\ee
\item  At each vertex $v\in\kappa^{(0)}$:
\begin{itemize}
\item take $\iota_{e_1,\alpha_{e_1}},...,\iota_{e_m,\alpha_{e_m}}$ where $e_1,...,e_m$ are the incoming edges
\item take $\iota_{e'_1,\alpha_{e'_1}}^\dagger,...,\iota_{e'_{m'},\alpha_{e'_{m'}}}^\dagger$  where $e'_1,...,e'_{m'}$ are the outgoing edges
\item define the vertex amplitude 
\be A_v(\iota)\ :=\ {\rm Tr}\left(\iota_{e_1,\alpha_1}\otimes...\otimes\iota_{e_m,\alpha_m}\,\otimes\,
\iota_{e'_1,\alpha'_1}^\dagger\otimes...\otimes\iota_{e'_{m'},\alpha'_{m'}}^\dagger\right) \ee
where `Tr' stands for the contraction (\ref{tr}) and can be defined by  the
evaluation of the spin-networks corresponding to the vertices (see \cite{SFLQG}).

\end{itemize}
\item to each face $f$ assign the face amplitude $d(f).$
\end{itemize}
With this data, with the vertex amplitudes and face amplitudes, the partition function takes the famous form
\be Z(\kappa,\rho)\ =\  \prod_{f\in
\kappa^{(2)}}d(f)\,\sum_{\iota}\prod_{v\in\kappa^{(0)}}A_v(\iota)\,{\cal A}({\rm
boundary}),\label{ZA}\ee   
 The result is independent of the choice of the orthonormal basis 
 of each ${\cal H}_e^{\rm SIMPLE}$.
 
\subsubsection{The EPRL map}  
Now we turn to the EPRL intertwiners. For every edge $e\in \kappa^{(1)}$ 
\be {\cal H}(e)^{\rm SIMPLE}\ =\  {\cal H}(e)^{\rm EPRL},\ \ \ \ 
{P}(e)^{\rm SIMPLE}\ =\  {P}(e)^{\rm EPRL}. \ee
The definition of ${\cal H}(e)^{\rm EPRL}$ uses a fixed number $\gamma\in\mathbb{R}$ called Barbero-Immirzi parameter. The Hilbert space ${\cal H}(e)^{\rm EPRL}$ can be non-empty only if the 2-complex $\kappa$ is labeled by EPRL representations. A representation $\rho=(\rho_{j^-},\rho_{j^+})$ of SU(2)$\times$SU(2), where $j^\pm\ \in\ \frac{1}{2}\mathbb{N}$ define the SU(2) representations in the usual way,  is an EPRL representation, provided there is $k\in \frac{1}{2}\mathbb{N}$ such that 
\be j^\pm\ =\ \frac{|1\pm \gamma|}{2}k.\label{k}\ee
Therefore, we will be considering here labellings of the faces of the 2-complex $\kappa$ with
EPRL representations
\be f\ \mapsto\ \rho(f)\ =\ (\rho_{j^-(f)},\rho_{j^+(f)}), \ee
each of which is defined in the Hilbert space  
$${\cal H}(f)\ =\ {\cal H}_{j^-(f)}\otimes {\cal H}_{j^+(f)}.$$
Each labeling defines also a labeling with SU(2) representations given by (\ref{k}),
\be f\ \mapsto \rho_{k(f)}, \ee
defined in the Hilbert space ${\cal H}_{k(f)}$. 
Given an edge $e$ and the corresponding Hilbert space 
\be {\cal H}(e)\ =\ \bigotimes_{f\ {\rm incoming}}{\cal H}_{j^-(f)}\otimes {\cal H}_{j^+(f)}  \,\otimes\,\bigotimes_{f'\ {\rm outgoing}}{\cal H}_{j^-(f')}^*\otimes {\cal H}_{j^+(f')}^*\label{He},\ee 
the natural isometric embeddings 
\be C\ : \ {\cal H}_k\ \rightarrow {\cal H}_{j^-}\otimes {\cal H}_{j^+}\ee
and the orthogonal projection operator
\be P:\ {\cal H}(e)\ \rightarrow  {\cal H}(e) \ee
onto the subspace ${\rm Inv}_{SU(2)\times SU(2)}\left( {\cal H}(e) \right)$
defines the natural map, the EPRL map:    
\begin{align*}
\iota^{\rm EPRL}:\ {\rm Inv}_{SU(2)}\big( \bigotimes_{f\ {\rm incoming}}{\cal H}_{k(f)}\,&\otimes\,\bigotimes_{f'\ {\rm outgoing}}{\cal H}_{k(f')}^* \big)\\ &\rightarrow\  \bigotimes_{f\ {\rm incoming}}{\cal H}_{j^-(f)}\otimes {\cal H}_{j^+(f)}  \,\otimes\,\bigotimes_{f'\ {\rm outgoing}}{\cal H}_{j^-(f')}^*\otimes {\cal H}_{j^+(f')}^*
\end{align*}
Its image is promoted to  the Hilbert space (\ref{simple}),
\be  {\cal H}_e^{\rm EPRL}\ :=\ {\iota^{\rm EPRL}}\left(
{\rm Inv}_{SU(2)}\left( \bigotimes_{f\ {\rm incoming}}{\cal H}_{k(f)}\,\otimes\,\bigotimes_{f'\ {\rm outgoing}}{\cal H}_{k(f')}^*
\right)\right).\ee

\subsubsection{The problem with the EPRL intertwines}\label{problem}
All the EPRL intertwiners can be constructed from the SU(2)
intertwiners by using the EPRL map. The point
is, that  one has to be more careful while doing that. 
First,  one has to make sure that the map ${\iota^{\rm EPRL}}$ is injective. If not, then 
Hilbert space of the SU(2)$\times$SU(2) EPRL intertwiners is smaller then the corresponding space of the SU(2) intertwiners and we should know how big it is. For $\gamma\ge 1$, the injectivity was proved in \cite{SFLQG}. In the next section we present a proof of the injectivity for $|\gamma| <1$.
Secondly, one should check whether or not the map ${\iota^{\rm EPRL}}$ is isometric. Given an orthonormal basis $\omicron_{e,1},...,\omicron_{e,n_e}$ of  the Hilbert space ${\rm Inv}_{SU(2)}\left( \bigotimes_{f\ {\rm incoming}}{\cal H}_{k(f)}\,\otimes\,\bigotimes_{f'\ {\rm outgoing}}{\cal H}_{k(f')}^* \right)$,
we have a corresponding basis \\${\iota^{\rm
EPRL}}(\omicron_{e,1}),...,{\iota^{\rm EPRL}}(\omicron_{e,n_e})$ of the
corresponding Hilbert space 
${\cal H}^{\rm EPRL}_e$. The question is, whether or not the latter basis is also orthonormal. We show in Section \ref{example}, this is not the case.  The direct procedure would be  to orthonormalize the  basis. We propose, however, a simpler solution.
\subsubsection{A solution}\label{solution}
 An intelligent way, is to go back
to the formula (\ref{ZP}) for the partition function and repeat the step leading to
(\ref{ZA}) with each projection $P_e^{\rm SIMPLE}=P_e^{\rm EPRL}$ written in terms of the corresponding basis  ${\iota^{\rm EPRL}}(\omicron_{e,1}),...,{\iota^{\rm EPRL}}(\omicron_{e,n_e})$. The suitable formula reads
\be P_e^{\rm EPRL}\ =\ \sum_{a,b=1}^{n_e}  h_e{}^{a\bar{b}}{\iota^{\rm EPRL}}(\omicron_{e,a})\otimes{\iota^{\rm EPRL}}(\omicron_{e,b})^\dagger\label{peprl} \ee
where $h_e{}^{a\bar{b}}$, $a,b=1,...,n_e$ define the inverse matrix to the matrix
\be h_{e,\bar{b}a}\ :=\ ({\iota^{\rm EPRL}}(\omicron_{e,b})|{\iota^{\rm EPRL}}(\omicron_{e,a}))\label{h}\ee
given by the Hilbert product $(\cdot|\cdot)$  in the Hilbert space (\ref{He}).
(In the Dirac notation, $P_e^{\rm EPRL}\ =\   h_e{}^{a\bar{b}}|e,a\rangle\langle e,b|$.)    

Now, we are in a position to write the resulting spin-foam amplitude formula for the partition function. It is assigned to a fixed  2-complex $\kappa$ and a fixed labeling of the faces by the EPRL representations 
\be f\ \mapsto\ \rho(f)\ =\ (\rho_{j^-_f},\rho_{j^+_f}). \ee
The labeling is accompanied by the corresponding labeling with the SU(2) i
\be f\ \mapsto\ \rho_{k_f}, \ee
according to (\ref{k}). For every edge $e\in\kappa^{(1)}$, in addition to the 
Hilbert space ${\cal H}_e^{\rm EPRL}\subset {\rm Inv}_{SU(2)\times SU(2)}\left({\cal H}_e\right)$, we also have its preimage, 
the Hilbert space  
\be {\rm Inv}\left( \bigotimes_{f\ {\rm incoming}}{{\cal H}_{(k_f)}}\,\otimes\,\bigotimes_{f'\ {\rm outgoing}}{{\cal H}}_{k_{f'}}^*\right). \ee
Therein, we fix  an orthonormal basis
\be \omicron_{e,a}, \ \ \ \ a\ =\ 1,2,...,n_e.\ee 
To define the partition function we proceed as follows:
\begin{itemize}
\item 
assign to every edge of $\kappa$  a pair of elements of the basis, 
\be \omicron\omicron:  e \ \mapsto\ (\omicron_{e,a_e},\omicron_{e,b_e}^\dagger), \ee
more specifically, $\omicron_{e,a_e}$ is assigned to the end point and $\omicron_{e,b_e}^\dagger$ to the beginning point of $e$, and we denote the assignment by the double symbol $\omicron\omicron$;
\item define for every edge an edge amplitude to be 
$$h_e(\omicron\omicron)\ :=\  h^{b_e\bar{a}_e}$$
\item to every vertex $v$ of $\kappa$ assign the vertex amplitude with the trace
defined by Fig.1, (\ref{tr}) and (\ref{peprl})  
$$A_v(\omicron\omicron)\ :=\ {\rm Tr}\left(\bigotimes_{e\ {\rm incoming}}{\iota^{\rm EPRL}}(\omicron_{e,a_e})\,\otimes\,\bigotimes_{e'\ {\rm outgoing}}{\iota^{\rm EPRL}}(\omicron_{e',b_{e'}})^\dagger\right)  $$
\item to every  face $f$ assign the amplitude $d_f$
\end{itemize}
Finally, the spin-foam amplitude formula for the partition function reads
\be Z(\kappa,\rho)\ =\ \prod_f d_f \sum_{\omicron \omicron} \prod_eh_e(\omicron\omicron)\prod_v
 A_v(\omicron\omicron){\cal A}({\rm boundary}).\label{pffinal}\ee

The matrix (\ref{h}) can be written in terms of the EPRL fusion coefficients, 
\be {\iota^{\rm EPRL}}(\omicron_{e,a})\ =:\ {f_{e}}_a^{c^-c^+}\iota_{e,c^-}\otimes\iota_{e,c^+}, \ee
defined by the decomposition into an orthonormal basis 
$\iota_{e,c^-}\otimes\iota_{e,c^+}$ in\\
${\rm Inv}_{SU(2)\times SU(2)}\left( \bigotimes_{f\ {\rm incoming}}{\cal
H}_{j^-(f)}\otimes {\cal H}_{j^+(f)}  \,\otimes\,\bigotimes_{f'\ {\rm
outgoing}}{\cal H}_{j^-(f')}^*\otimes {\cal H}_{j^+(f')}^* \right).$\\
Then, we have
\be h_{e,\bar{a}b}\ =\ \sum_{c^+,c^-}\overline{{f_{e}}_a^{c^-c^+}}{f_{e}}_b^{c^-c^+}.\ee

 We are in a position now, to compare our partition function with
that of \cite{EPRL}. The partition function of \cite{EPRL} is given by 
replacing the matrix $h_{e,\bar{a}b}$ in (\ref{pffinal}) with $\delta_{ab}$.
  The example below gives quantitative idea about the difference between the two
possible definitions of  partition function
 for the EPRL model.

\subsubsection{Example of the edge amplitude $h^{\bar{b}a}$  showing
that the EPRL map is not isometry}\label{example}
We will show in this section, that the EPRL map is not isometric.
We  calculate the edge amplitude defined in the previous section
in a simple example, and see that its matrix is not proportional 
to the identity matrix, or even not diagonal.    
Consider an edge at which exactly  four faces meet. Assume the orientation of each face is opposite to the orientation of the edge. We have an intertwiner $\omicron \in {\rm Inv}_{SU(2)}\left( {\cal H}_{k_1}\otimes{\cal H}_{k_2}\otimes {\cal H}_{k_3}\otimes {\cal H}_{k_4} \right)$ assigned to the end point of this edge. We choose a basis $|k_i m_i \rangle$ (the eigenvector of the third component of angular momentum operator with eigenvalue $m_i$) in each space ${\cal H}_{k_i}$, $i\in \{1,\ldots,4\}$. We choose a real basis of the space $ {\rm Inv}_{SU(2)}\left( {\cal H}_{k_1}\otimes{\cal H}_{k_2}\otimes {\cal H}_{k_3}\otimes {\cal H}_{k_4} \right)$ in the following form \cite{Fusion_Asymptotics}:
\[
(\omicron_a)_{k_1 m_1 k_2 m_2 k_3 m_3 k_4 m_4}=\sqrt{2a+1}\sum_{m=-a}^{a}\sum_{m'=-a}^{a}(-1)^{a+m}\left (
\begin{array}{ccc}
k_1&k_2&a\\
m_1&m_2&m
\end{array}
\right )
\delta_{m,-m'}
\left (
\begin{array}{ccc}
k_3&k_4&a\\
m_3&m_4&m'
\end{array}
\right )
\]
where $\left (
\begin{array}{ccc}
j_1&j_2&j_3\\
m_1&m_2&m_3
\end{array}
\right )$ is the Wigner 3j-Symbol, $\delta_{m,m'}$ is the Kronecker Delta.\\
Let $\iota_{a^+}\otimes \iota_{a^-}$ be the basis of the space $ {\rm Inv}\left( {\cal H}_{j_1^+}\otimes\ldots\otimes {\cal H}_{j_4^+}\right) \otimes {\rm Inv}\left( {\cal H}_{j_1^-}\otimes\ldots \otimes {\cal H}_{j_4^-} \right)$. The intertwiner $\iota^{\rm EPRL}(\omicron_a)$ expressed in this basis takes the following form
(we skip in this section the subscript $e$ indicating the dependence on edge):
\[
{\iota^{\rm EPRL}}(\omicron_a)=f_{a}^{a^+ a^-} \iota_{a^+} \otimes \iota_{a^-}
\]
where $f_{a}^{a^+ a^-}$ are real and are known as fusion coefficients \cite{EPRL}.
The tensor $h_{\bar{a}b}$ could be expressed in terms of them:
\[
h_{\bar{a}b}=({\iota^{\rm EPRL}}(\omicron_a)|{\iota^{\rm EPRL}}(\omicron_b) )=\sum_{a^+ a^-} f_{a}^{a^+ a^-} f_{b}^{a^+ a^-}
\]
As an example we give the result of the calculation of the $h_{a\bar{b}}$ matrix for $\gamma=\frac{1}{2},j_1=2,j_2=4,j_3=4,j_1=2;a,b\in \{2,\ldots,6 \}$): 
\[
\left(
\begin{array}{ccccc}
 \frac{53723 \underline{}}{175616} & -\frac{2265 \sqrt{\frac{5}{7}} \underline{}}{50176} & \frac{5093 \sqrt{5} \underline{}}{1053696} & -\frac{3 \sqrt{55} \underline{}}{25088} & 0 \\
 -\frac{2265 \sqrt{\frac{5}{7}} \underline{}}{50176} & \frac{117853 \underline{}}{501760} & -\frac{12805 \underline{}}{301056 \sqrt{7}} & \frac{45 \sqrt{\frac{11}{7}} \underline{}}{7168} & -\frac{3 \sqrt{\frac{13}{7}} \underline{}}{8960} \\
 \frac{5093 \sqrt{5} \underline{}}{1053696} & -\frac{12805 \underline{}}{301056 \sqrt{7}} & \frac{741949 \underline{}}{3512320} & -\frac{781 \sqrt{11} \underline{}}{752640} & \frac{5 \sqrt{13} \underline{}}{5376} \\
 -\frac{3 \sqrt{55} \underline{}}{25088} & \frac{45 \sqrt{\frac{11}{7}} \underline{}}{7168} & -\frac{781 \sqrt{11} \underline{}}{752640} & \frac{583 \underline{}}{2560} & 0 \\
 0 & -\frac{3 \sqrt{\frac{13}{7}} \underline{}}{8960} & \frac{5 \sqrt{13} \underline{}}{5376} & 0 & \frac{13 \underline{}}{40}
\end{array}
\right)
\]
We used the analytic expression for the fusion coefficient presented in \cite{Fusion_Asymptotics}. Clearly this matrix is nondiagonal. It shows that the EPRL map is not isometric. The edge amplitude $h_{e}{}^{b\bar{a}}$ is given by the inverse matrix:
\[\left(
\begin{array}{ccccc}
 \frac{46976713}{14064543 \underline{}} & \frac{31728718 \sqrt{\frac{7}{5}}}{70322715 \underline{}} & -\frac{75194882}{257849955 \sqrt{5} \underline{}} & -\frac{3865813}{70322715 \sqrt{55} \underline{}} & \frac{13066606}{773549865 \sqrt{65} \underline{}} \\
 \frac{31728718 \sqrt{\frac{7}{5}}}{70322715 \underline{}} & \frac{7682388364}{1758067875 \underline{}} & \frac{67078172 \sqrt{7}}{586022625 \underline{}} & -\frac{318127222 \sqrt{\frac{7}{11}}}{1758067875 \underline{}} & \frac{7212044 \sqrt{\frac{7}{13}}}{1758067875 \underline{}} \\
 -\frac{75194882}{257849955 \sqrt{5} \underline{}} & \frac{67078172 \sqrt{7}}{586022625 \underline{}} & \frac{112636131412}{23636245875 \underline{}} & \frac{1305090458}{6446248875 \sqrt{11} \underline{}} & -\frac{12462294236}{70908737625 \sqrt{13} \underline{}} \\
 -\frac{3865813}{70322715 \sqrt{55} \underline{}} & -\frac{318127222 \sqrt{\frac{7}{11}}}{1758067875 \underline{}} & \frac{1305090458}{6446248875 \sqrt{11} \underline{}} & \frac{85031744497}{19338746625 \underline{}} & -\frac{192524374}{19338746625 \sqrt{143} \underline{}} \\
 \frac{13066606}{773549865 \sqrt{65} \underline{}} & \frac{7212044 \sqrt{\frac{7}{13}}}{1758067875 \underline{}} & -\frac{12462294236}{70908737625 \sqrt{13} \underline{}} & -\frac{192524374}{19338746625 \sqrt{143} \underline{}} & \frac{8510451083428}{2765440767375 \underline{}}
\end{array}
\right)
\]
\\

\section{Injectivity of the map $\omicron\mapsto\iota^{\rm EPRL}(\omicron)$} 
\label{Sec_inject}
\setlength{\unitlength}{1cm}

\pudla
This part of the paper is devoted to the injectivity of EPRL intertwiner. More explicitly, we will prove the result stated in \ref{inj-EPRL}.
\subsection{Statement of the result}
We assume that $\gamma\in\R$ and $|\gamma|<1$. Suppose that
\begin{align*}
 &(k_1,\ldots, k_n)\in{\frac{1}{2}} \N\\
&\forall_i\ \ j^\pm_i=\frac{1\pm\gamma}{2} k_i\in {\frac{1}{2}} \N.
\end{align*}
We consider the EPRL map
\begin{align}
\label{Def}
&\iota^{\rm EPRL}\colon  {\rm Inv}\left(\Hil_{k_1}\otimes\cdots\otimes\Hil_{k_n}\right)\rightarrow {\rm Inv}\left(\Hil_{j^+_1}\otimes\cdots\otimes\Hil_{j^+_n}\right)\otimes {\rm Inv}\left(\Hil_{j^-_1}\otimes\cdots\otimes\Hil_{j^-_n}\right)\\
&\iota^{\rm EPRL}(\omicron)_{j^+_1 A_1\ldots j^+_n A_n j^-_1 B_1\ldots j^-_n B_n}=
\omicron_{k_1 C_1\ldots k_n C_n} C^{k_1 C_1}_{j^+_1 D_1 j^-_1 E_1}\cdots C^{k_n C_n}_{j^+_n D_n j^-_n E_n} P^{j^+_1 D_1\ldots j^+_n D_n}_{j^+_1 A_1\ldots j^+_n A_n} P^{j^-_1 E_1\ldots j^-_n E_n}_{j^-_1 B_1\ldots j^-_n B_n}\nonumber
\end{align}
with $P$ standing for the orthogonal projections onto the subspaces of 
the SU(2) invariants of the Hilbert spaces $\Hil_{j^+_1}\otimes\cdots\otimes\Hil_{j^+_n}$, and respectively, 
$\Hil_{j^-_1}\otimes\cdots\otimes\Hil_{j^-_n}$.

Now we can state our result.
\begin{thm}
\label{inj-EPRL}
For any sequence $(k_1,\ldots, k_n)\in \N$ such that $(j_1^+,\ldots, j_n^+)\in \N$, the map
\[
 \iota^{\rm EPRL}\colon  {\rm Inv}\left(\Hil_{k_1}\otimes\cdots\otimes\Hil_{k_n}\right)\rightarrow {\rm Inv}\left(\Hil_{j^+_1}\otimes\cdots\otimes\Hil_{j^+_n}\right)\otimes {\rm Inv}\left(\Hil_{j^-_1}\otimes\cdots\otimes\Hil_{j^-_n}\right)
\]
is injective.
\end{thm}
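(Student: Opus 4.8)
The plan is to exploit the special feature of the regime $|\gamma|<1$: since then $j^+_i+j^-_i=k_i$, the embedding $C\colon\Hil_{k_i}\to\Hil_{j^+_i}\otimes\Hil_{j^-_i}$ is the Cartan (top-component) embedding, and this is what makes it factorize on coherent states. Writing $\iota^{\rm EPRL}=(P^+\otimes P^-)\circ C^{\otimes n}$, where $P^\pm$ are the projections onto the SU(2)-invariants of the two sectors, injectivity is equivalent to the implication $(P^+\otimes P^-)C^{\otimes n}\omicron=0\Rightarrow\omicron=0$. Because $C^{\otimes n}$ alone is isometric, hence injective, the entire difficulty is that the projections might annihilate the embedded vector, and the goal is to exclude this.

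First I would introduce the SU(2) coherent states $|j\rangle_g:=g\,|j,j\rangle$ and use the defining property of the Cartan embedding, $C\,|k\rangle_g=|j^+\rangle_g\otimes|j^-\rangle_g$, valid precisely because $k=j^++j^-$ and $C$ is equivariant and carries the highest weight vector to the highest weight vector. Since coherent states are overcomplete, $\iota^{\rm EPRL}(\omicron)$ vanishes iff its coherent-state symbol
\[
\Phi(\{a_i\},\{b_i\})=\big\langle\,\textstyle\bigotimes_i|j^+_i\rangle_{a_i}\otimes\bigotimes_i|j^-_i\rangle_{b_i}\,,\ \iota^{\rm EPRL}(\omicron)\big\rangle
\]
vanishes identically. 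Using $P^\pm=\int dh\,h^{\otimes n}$ and moving the (self-adjoint) projections onto the bra, I would rewrite $\Phi$ as a double group average of the overlaps $\big\langle\bigotimes_i C^\dagger\big(|j^+_i\rangle_{h^+a_i}\otimes|j^-_i\rangle_{h^-b_i}\big),\,\omicron\big\rangle$. On the diagonal probe $a_i=b_i=g_i$, the part of the average with $h^+=h^-$ reduces, via $C^\dagger\big(|j^+\rangle_{hg}\otimes|j^-\rangle_{hg}\big)=|k\rangle_{hg}$ and the invariance of $\omicron$, to the Perelomov symbol $f_\omicron(g_1,\dots,g_n)=\langle\bigotimes_i|k_i\rangle_{g_i},\omicron\rangle$, which already determines $\omicron$; the remaining $h^+\neq h^-$ part unavoidably mixes in the off-diagonal coherent states.

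The cleanest bookkeeping is holomorphic: represent each $\Hil_{k_i}$ by the degree-$2k_i$ homogeneous polynomials in a spinor $z_i$, so that $f_\omicron$ becomes a polynomial in the basic diagonal invariants $[z_iz_j]=z_i^0z_j^1-z_i^1z_j^0$, while $\Phi$ becomes a bi-sector polynomial in the independent invariants $[z^+_iz^+_j]$ and $[z^-_iz^-_j]$. In this language $C$ together with the two projections realizes an explicit linear \emph{lift} sending the invariant polynomial $f_\omicron$ to $\Phi$, and injectivity of $\iota^{\rm EPRL}$ is exactly injectivity of this lift on the space of diagonal SU(2)-invariants.

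The main obstacle, and the heart of the proof, is to show that this lift has trivial kernel, i.e.\ that the average over $h^+\neq h^-$ cannot conspire to cancel the nonzero diagonal contribution. I expect to handle this by isolating a leading term: grade the invariant polynomials (for instance by total degree in a distinguished determinant, or by a suitable weight) and show that the top-graded part of $\Phi$ reproduces $f_\omicron$ up to a strictly nonzero numerical factor, whose non-vanishing is where the integrality hypotheses $k_i,j^+_i\in\N$ enter. Equivalently, the whole statement may be phrased as positive-definiteness of the Gram matrix $h_{\bar ab}=(\iota^{\rm EPRL}(\omicron_a)|\iota^{\rm EPRL}(\omicron_b))$ of (\ref{h}); the coherent-state/holomorphic computation above is precisely what is needed to exhibit a nonzero overlap and thereby exclude a null vector in the kernel.
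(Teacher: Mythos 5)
Your reduction is sound as far as it goes: writing $\iota^{\rm EPRL}=(P^+\otimes P^-)\circ C^{\otimes n}$, observing that $C^{\otimes n}$ is injective so the only possible failure is annihilation by the projections, and exploiting the factorization $C\,|k\rangle_g=|j^+\rangle_g\otimes|j^-\rangle_g$ (valid precisely because $j^++j^-=k$ in this regime) is the right framing. But the argument stops exactly where the work begins. Two concrete problems. First, splitting the double average $\int dh^+\,dh^-$ into a ``diagonal'' part $h^+=h^-$ and an ``off-diagonal'' remainder is not meaningful: the diagonal is a Haar-measure-zero subset, so the Perelomov symbol $f_\omicron$ is not an additive contribution you can isolate this way. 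Second, and decisively, the claim that a top-graded piece of $\Phi$ reproduces $f_\omicron$ with a strictly nonzero coefficient is only announced (``I expect to handle this by isolating a leading term''); that nonvanishing \emph{is} the theorem, and nothing in your write-up specifies the grading, identifies the leading term, or shows its coefficient is nonzero. Restating the goal as positive-definiteness of the Gram matrix $h_{\bar a b}$ is likewise equivalent to the statement, not a proof of it.

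For comparison, the paper closes this gap by induction on the valence $n$. It decomposes $\omicron$ over intermediate spins $k_\alpha$ obtained by coupling the first two legs, selects the \emph{minimal} $k_\alpha$ with $Q_{k_\alpha}\omicron\neq 0$, and constructs a dual vector $G_{k_\alpha j^\pm_\alpha}\phi$ whose pairing with $\iota(Q^*_{k_\beta}\omicron)$ vanishes for $k_\beta>k_\alpha$ and reduces to $\langle\iota_{k_\alpha k_3\ldots k_n}Q_{k_\alpha}\omicron,\phi\rangle$ for $k_\beta=k_\alpha$; this triangular structure replaces your hoped-for leading-term extraction. The nonzero coefficient then comes from an explicit diagrammatic evaluation of a contracted product of five Clebsch--Gordan intertwiners, which is nonzero exactly because all couplings are stretched ($j^++j^-=k$) --- the same special feature you identify, but actually used. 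A further point your sketch does not touch: the intermediate labels $j^\pm_\alpha=\clpn{\tfrac{1\pm\gamma}{2}k_\alpha}_\pm$ are in general not of EPRL form, so the induction must be run over an enlarged class of label sequences (condition \textbf{Con~$n$} of the paper) together with the triangle-inequality lemmas that guarantee the relevant invariant subspaces are nontrivial; any completed version of your argument would have to confront the analogous integrality and admissibility issues.
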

 In other words we consider here the case, when 
  the EPRL map carries SO(3) representations into 
  SO(3) x SO(3) representations.

\subsection{Proof of the theorem}\label{proof}
In order to make the proof transparent, we divide it into subsections. In subsection \ref{inj-aux} some auxiliary definitions are introduced. We state also an inductive hypothesis, that will be proved in subsection \ref{inj-proof}. The injectivity of EPRL map follows from that result. The main technical tool of the proof is placed in subsection \ref{inj-map}, where the map \ref{mapG} is defined. 
\subsubsection{Auxiliary definitions}
\label{inj-aux}
Let us introduce some notations
\begin{df}
For $x\in\R$ we define 
\begin{itemize}
\item $\clpn{x}_+$ as the only integer number in the interval $\left(x-\frac{1}{2},x+\frac{1}{2}\right]$
\item $\clpn{x}_-$ as the only integer number in the interval $\left[x-\frac{1}{2},x+\frac{1}{2}\right)$
\end{itemize}
\end{df}
and
\begin{df}
A sequence of half natural numbers $(k_1,\ldots, k_n)$ satisfies {\it triangle inequality} if
\[
 \forall_i\ \ k_i\le \sum_{j\not=i} k_j.
\]
\end{df}

One can define map $\iota$ under condition  
\begin{quote}
{\bf Con $n$:} Sequences of natural numbers $(k_1,\ldots, k_n)$ and $(j^\pm_1,\ldots, j^\pm_n)$ are such that
\begin{itemize}
 \item $(k_1,\ldots, k_n)$ satisfies triangle inequality,
\item $j^+_i+j^-_i=k_i$ for $i=1,\ldots, n$,
\item $j^\pm_i=\frac{1\pm\gamma}{2} k_i$ for $i\not=1$ and
\[
 j^+_1=\clpn{\frac{1+\gamma}{2} k_1}_+,\ j^-_1=\clpn{\frac{1-\gamma}{2} k_1}_-
\]
or
\[
 j^+_1=\clpn{\frac{1+\gamma}{2} k_1}_-,\ j^-_1=\clpn{\frac{1-\gamma}{2} k_1}_+.
\]
\end{itemize}
\end{quote}
Let us define
\begin{align*}
&\iota_{k_1\ldots k_n}\colon  {\rm Inv}\left(\Hil_{k_1}\otimes\cdots\otimes\Hil_{k_n}\right)\rightarrow {\rm Inv}\left(\Hil_{j^+_1}\otimes\cdots\otimes\Hil_{j^+_n}\right)\otimes {\rm Inv}\left(\Hil_{j^-_1}\otimes\cdots\otimes\Hil_{j^-_n}\right)\\
&\iota_{\ldots}(\omicron)_{j^+_1 A_1\ldots j^+_n A_n j^-_1 B_1\ldots j^-_n B_n}=
\omicron_{k_1 C_1\ldots k_n C_n} C^{k_1 C_1}_{j^+_1 D_1 j^-_1 E_1}\cdots C^{k_n C_n}_{j^+_n D_n j^-_n E_n} P^{j^+_1 D_1\ldots j^+_n D_n}_{j^+_1 A_1\ldots j^+_n A_n} P^{j^-_1 E_1\ldots j^-_n E_n}_{j^-_1 B_1\ldots j^-_n B_n}
\end{align*}
with $P$ standing for projections onto invariant subspaces. We will use the letter $\iota_{k_1\ldots k_n}$ for all sequences $(k_1,\ldots,k_n)$, $(j^\pm_1,\ldots,j^\pm_n)$ if it do not cause any misunderstanding.

We will base our prove on the following inductive hypothesis:
\begin{quote}
{\bf Hyp $n$:} Suppose that $(k_1,\ldots, k_n)$ and $(j^\pm_1,\ldots, j^\pm_n)$ satisfy condition {\bf Con $n$} and that $\omicron\in 
{\rm Inv}\left(\Hil_{k_1}\otimes\cdots\otimes\Hil_{k_n}\right)$. Then, there 
exists 
$$\phi\in {\rm Inv}\left(\Hil_{j^+_1}\otimes\cdots\otimes\Hil_{j^+_n}\right)\otimes {\rm Inv}\left(\Hil_{j^-_1}\otimes\cdots\otimes\Hil_{j^-_n}\right)$$ such that
$\langle \iota_{k_1\ldots k_n}(\omicron),\phi\rangle\not=0$.
\end{quote}
This in fact proves the injectivity.

\subsubsection{Useful inequalities}

Both $\clpn{x}_\pm$ are increasing functions and satisfy ($x,y\in\R$, $j\in \N$)
\begin{itemize}
\item[a)] $\clpn{x+j}=\clpn{x}+j$ and $\clpn{j}_\pm=j$,
\item[b)] if $x>y$ then $\clpn{x}_-\ge\clpn{y}_+$ and if $x\ge y$ then $\clpn{x}_+\ge\clpn{y}_-$
\item[c)] if $x+y\in \Z$ then $\clpn{x}_++\clpn{y}_-=x+y$
\item[d)] if $x+y\ge j$ then $\clpn{x}_++\clpn{y}_-\ge j$
\end{itemize}
In order to prove the last point, we notice that $\clpn{x}_+> x-\frac{1}{2}$ and $\clpn{y}_-\ge y-\frac{1}{2}$ so $\clpn{x}_++\clpn{y}_->x+y-1\ge j-1$ but as $j$ is an integer number $\clpn{x}_++\clpn{y}_-\ge j$.

\begin{lm}\label{triangle}
Suppose that $(k,l,j)$ satisfies triangle inequality and that $\frac{1+\gamma}{2}k\in \N$ then 
\begin{itemize}
\item both triples $\left(\frac{1+\gamma}{2}k, \clpn{\frac{1+\gamma}{2}l}_\pm, \clpn{\frac{1+\gamma}{2}j}_\pm\right)$ satisfy triangle inequalities if $k+l=j$ or $k+j=l$
\item both triples $\left(\frac{1+\gamma}{2}k, \clpn{\frac{1+\gamma}{2}l}_\pm, \clpn{\frac{1+\gamma}{2}j}_\mp\right)$ satisfy triangle inequalities if $k+l>j$ and $k+j>l$.
\end{itemize}
\end{lm}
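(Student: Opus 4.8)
The plan is to reduce every triangle inequality appearing in the statement to the elementary additivity and monotonicity facts a)--d) just established, exploiting crucially that $\frac{1+\gamma}{2}k$ is an integer. Throughout I set $t:=\frac{1+\gamma}{2}$; since $|\gamma|<1$ we have $t>0$, so multiplying the hypotheses on $(k,l,j)$ by $t$ preserves their direction, and $tk\in\N$ by assumption. The whole point will be that the integrality of $tk$ lets me absorb it into the rounding bracket via property a), after which each inequality becomes a bare comparison of two brackets handled by b) or d).

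First I would dispatch the degenerate cases $k+l=j$ and $k+j=l$, which are the easy ones. Suppose $k+l=j$. Then $tj=tk+tl$, and because $tk\in\N$, property a) gives $\clpn{tj}_\pm=tk+\clpn{tl}_\pm$ for each choice of sign. The three triangle inequalities for $\left(tk,\clpn{tl}_\pm,\clpn{tj}_\pm\right)$ then collapse to $0\le 2\clpn{tl}_\pm$, to $0\le 2tk$, and to the exact equality $\clpn{tj}_\pm=tk+\clpn{tl}_\pm$; the first two hold since $t>0$ and $k,l\ge 0$ force $\clpn{tl}_\pm\ge 0$, and the third is automatic. The case $k+j=l$ is word-for-word the same after exchanging the roles of $l$ and $j$.

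Next comes the generic case $k+l>j$ and $k+j>l$, with the opposite signs $\clpn{tl}_\pm,\clpn{tj}_\mp$, where the actual content lies. Take the choice $\left(tk,\clpn{tl}_+,\clpn{tj}_-\right)$. Using property a) I would rewrite $tk+\clpn{tj}_-=\clpn{tk+tj}_-$ and $tk+\clpn{tl}_+=\clpn{tk+tl}_+$. The three inequalities then read: $tk\le\clpn{tl}_++\clpn{tj}_-$, which is exactly property d) applied to $tl+tj\ge tk$ (the triangle inequality $l+j\ge k$ scaled by $t$); $\clpn{tl}_+\le\clpn{tk+tj}_-$, which is the strict clause of b) since $tk+tj>tl$ (from $k+j>l$); and $\clpn{tj}_-\le\clpn{tk+tl}_+$, which is the non-strict clause of b) since $tk+tl\ge tj$ (from $k+l>j$). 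The complementary sign choice $\left(tk,\clpn{tl}_-,\clpn{tj}_+\right)$ follows from the identical three steps with $l$ and $j$ interchanged.

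The one genuine subtlety, and the step I would watch most carefully, is the bookkeeping of strict versus non-strict inequalities against the $\pm$ rounding directions. Property b) yields $\clpn{x}_-\ge\clpn{y}_+$ only under the strict hypothesis $x>y$, whereas $\clpn{x}_+\ge\clpn{y}_-$ needs merely $x\ge y$; and at a half-odd-integer argument the two brackets genuinely differ, so strictness cannot be relaxed where it is demanded. I must therefore route the strict triangle inequalities $k+l>j$ and $k+j>l$ precisely into the comparison that calls for the strict clause, and reserve the non-strict $l+j\ge k$ for property d). Getting this pairing right is what closes the opposite-sign case; everything else is pure mechanical substitution made possible by the integrality of $tk$.
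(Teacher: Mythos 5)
Your proof is correct and follows essentially the same route as the paper's: the degenerate cases collapse via the additivity property a) using the integrality of $\frac{1+\gamma}{2}k$, and the generic case feeds the two strict triangle inequalities into the strict clause of b) and the remaining non-strict one into d). If anything, your pairing of hypotheses to conclusions is the more careful one --- the paper's printed justifications for the first two inequalities of the second case appear to have been swapped --- so no changes are needed.
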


\begin{proof}
In the first case suppose that $k+l=j$ holds, then $\frac{1+\gamma}{2}k+\clpn{\frac{1+\gamma}{2}l}_\pm= \clpn{\frac{1+\gamma}{2}j}_\pm$ that proves triangle inequality.

In the second case, we restrict our attention to $\left(\frac{1+\gamma}{2}k, \clpn{\frac{1+\gamma}{2}l}_+, \clpn{\frac{1+\gamma}{2}j}_-\right)$.
We have

\begin{itemize}
\item $\frac{1+\gamma}{2}k + \clpn{\frac{1+\gamma}{2}j}_-\ge \clpn{\frac{1+\gamma}{2}l}_+$ because $\frac{1+\gamma}{2}k+{\frac{1+\gamma}{2}l}> {\frac{1+\gamma}{2}j}$,
\item $\frac{1+\gamma}{2}k + \clpn{\frac{1+\gamma}{2}l}_+\ge \clpn{\frac{1+\gamma}{2}j}_-$
because $\frac{1+\gamma}{2}k+{\frac{1+\gamma}{2}j}\ge {\frac{1+\gamma}{2}l}$,
\item $\frac{1+\gamma}{2}k\le \clpn{\frac{1+\gamma}{2}l}_++ \clpn{\frac{1+\gamma}{2}j}_-$ from the property d) listed above.
\end{itemize}
The case of $\left(\frac{1+\gamma}{2}k, \clpn{\frac{1+\gamma}{2}l}_-, \clpn{\frac{1+\gamma}{2}j}_+\right)$ is analogous.
\end{proof}

\begin{lm}
\label{triangle2}
Suppose that $(k_1,\ldots,k_n)$ satisfies triangle inequality and that $\frac{1+\gamma}{2}k_i\in \N$ for $i=2,\ldots,n$, then $\left(\clpn{\frac{1+\gamma}{2}k_1}_\pm, \frac{1+\gamma}{2}k_2,\ldots,\frac{1+\gamma}{2}k_n\right)$ also satisfy triangle inequalities.
\end{lm}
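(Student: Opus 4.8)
The plan is to reduce everything to the triangle inequality already assumed for $(k_1,\ldots,k_n)$, exploiting that only the first entry gets rounded while the entries $\frac{1+\gamma}{2}k_2,\ldots,\frac{1+\gamma}{2}k_n$ are integers by hypothesis. Throughout I write $a=\frac{1+\gamma}{2}$, which is strictly positive because $|\gamma|<1$, and I use the defining bounds $ak_1-\frac12\le\clpn{ak_1}_\pm\le ak_1+\frac12$. Since every triangle inequality for the new $n$-tuple involves all slots, there are exactly $n$ inequalities to verify: the single one bounding the rounded first entry by the sum of the others, and the $n-1$ inequalities bounding each unchanged entry.

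First I would check $\clpn{ak_1}_\pm\le\sum_{j\ge2}ak_j$. The right-hand side is a sum of integers, hence an integer $M$, and the original triangle inequality $k_1\le\sum_{j\ge2}k_j$ multiplied by $a>0$ gives $ak_1\le M$. Combining this with $\clpn{ak_1}_\pm\le ak_1+\frac12\le M+\frac12$ and using that $\clpn{ak_1}_\pm$ is an integer forces $\clpn{ak_1}_\pm\le M$, as required.

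Next, for each fixed $i\ge2$, I would check $ak_i\le\clpn{ak_1}_\pm+\sum_{j\ge2,\,j\ne i}ak_j$. Setting $N:=\sum_{j\ge2,\,j\ne i}ak_j\in\N$, the original inequality $k_i\le k_1+\sum_{j\ge2,\,j\ne i}k_j$ multiplied by $a$ yields $ak_i-N\le ak_1$, where both $ak_i$ and $N$ are integers. The lower bound $\clpn{ak_1}_\pm\ge ak_1-\frac12\ge(ak_i-N)-\frac12$, together with integrality of $\clpn{ak_1}_\pm$ and of $ak_i-N$, then forces $\clpn{ak_1}_\pm\ge ak_i-N$, which rearranges to the desired inequality. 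I note in passing that the inequalities among the unchanged entries $ak_2,\ldots,ak_n$ require no rounding argument, since those numbers are already integers.

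There is no deep obstacle here; this is a rounding-bookkeeping lemma. The only point requiring care is that the argument must cover both signs $\pm$ uniformly while handling the half-unit shift from the rounding in the correct direction. For the $+$ sign in the second family one could instead invoke property d) directly with $x=ak_1$, $y=N$, $j=ak_i$, but that route does not cover the $-$ sign because $\clpn{\cdot}_-\le\clpn{\cdot}_+$; the explicit two-sided bounds $ak_1-\frac12\le\clpn{ak_1}_\pm\le ak_1+\frac12$ sidestep this and dispose of both signs at once, which is why I prefer them.
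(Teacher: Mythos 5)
Your proof is correct and follows essentially the same route as the paper's: multiply each original triangle inequality by $\frac{1+\gamma}{2}>0$, observe that all terms except the one involving $k_1$ are integers, and control the rounding of that single term. The paper compresses this into one sentence via monotonicity of $\clpn{\cdot}_\pm$ together with $\clpn{x+j}_\pm=\clpn{x}_\pm+j$, while you use the explicit two-sided bounds plus integrality, but the bookkeeping is the same.
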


\begin{proof}
That follows from the monotonicity of functions $\clpn{x}_\pm$ and the fact that in the inequality
\[
 \frac{1+\gamma}{2}k_i\le \sum_{j\not=i} \frac{1+\gamma}{2}k_j
\]
all terms but one are integer.
\end{proof}

\begin{lm}
Suppose that $(k_1,\ldots, k_n)\in \N$ satisfies triangle inequality then
\[
 {\rm Inv}\left(\Hil_{k_1}\otimes\cdots\otimes\Hil_{k_n}\right)
\]
is nontrivial.
\end{lm}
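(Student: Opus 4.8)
The plan is to argue by induction on the number $n$ of tensor factors, reducing $n$ by one at each step by fusing two factors through the Clebsch--Gordan decomposition. For the base case $n=2$ the triangle inequality forces $k_1=k_2$; then $\Hil_{k_1}\otimes\Hil_{k_2}$ contains the singlet $\Hil_0$, so the invariant space is one-dimensional and in particular nontrivial.

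For the inductive step I would assume the claim for all sequences in $\N$ of length $n-1$ and let $(k_1,\ldots,k_n)$ with $n\ge 3$ satisfy the triangle inequality. Fusing the last two factors,
\[
 \Hil_{k_{n-1}}\otimes\Hil_{k_n}\ =\ \bigoplus_{l=|k_{n-1}-k_n|}^{k_{n-1}+k_n}\Hil_l ,
\]
and using that taking $SU(2)$-invariants commutes with direct sums, one sees that for each admissible $l$ the space ${\rm Inv}\left(\Hil_{k_1}\otimes\cdots\otimes\Hil_{k_{n-2}}\otimes\Hil_l\right)$ is a direct summand of ${\rm Inv}\left(\Hil_{k_1}\otimes\cdots\otimes\Hil_{k_n}\right)$. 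It therefore suffices to exhibit a single integer $l$ in the coupling range $|k_{n-1}-k_n|\le l\le k_{n-1}+k_n$ for which the shortened sequence $(k_1,\ldots,k_{n-2},l)$ again satisfies the triangle inequality. Such an $l$ is automatically a non-negative integer, so the reduced sequence lies in $\N$, the inductive hypothesis applies, and nontriviality of the smaller invariant space forces nontriviality of the original one.

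Writing $S'=\sum_{i\le n-2}k_i$ and $M=\max_{i\le n-2}k_i$, the two conditions on $l$ (membership in the coupling range, and the triangle inequality for the reduced sequence) are together equivalent to
\[
 \max\bigl(|k_{n-1}-k_n|,\ 2M-S'\bigr)\ \le\ l\ \le\ \min\bigl(k_{n-1}+k_n,\ S'\bigr).
\]
Since every bound here is an integer, the real content is to check that this window is nonempty, i.e. to verify the four cross inequalities $|k_{n-1}-k_n|\le k_{n-1}+k_n$, $|k_{n-1}-k_n|\le S'$, $2M-S'\le k_{n-1}+k_n$, and $2M-S'\le S'$. I expect this bookkeeping to be the only genuine obstacle, and it is handled entirely by the full triangle inequality rewritten as $2k_i\le\sum_{j}k_j$: the first inequality is trivial; the second and third reduce to this bound applied to $i=n-1$ and to the index realizing $M$, respectively; and the last, $M\le S'$, holds because the sum $S'$ is nonnegative and already contains its maximal term $M$. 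Thus the window for $l$ stays open exactly when the triangle inequality holds, which closes the induction.
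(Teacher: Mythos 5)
Your proof is correct and follows essentially the same route as the paper: induction on $n$, fusing one pair of factors via the Clebsch--Gordan decomposition, and verifying that the window of admissible intermediate spins is nonempty by reducing the cross inequalities to the original triangle inequality (the paper fuses $k_1,k_2$ into an auxiliary $k_\alpha$ rather than $k_{n-1},k_n$ into $l$, but this is immaterial). No gaps.
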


\begin{proof}
We will find an $k_\alpha\in \N$ such that both $(k_\alpha,k_1,k_2)$ and $(k_\alpha,k_3,\ldots, k_n)$ satisfy triangle inequalities.  We then have $k_\alpha+k_1+k_2\in \N$ and $(k_\alpha,k_3,\ldots, k_n)\in \N$. By induction there would be 
\[
 0\not=\phi\in {\rm Inv} \left(\Hil_{k_\alpha}\otimes \Hil_{k_3}\otimes\cdots\Hil_{k_n}\right),
\]
and then
\[
 0\not=C^{k_\alpha A}_{k_1 A_1 k_2 A_2} \phi_{k_\alpha A k_3 A_3\ldots k_n A_n}
\]
proves nontriviality. Now we extract conditions on $k_\alpha$ from triangle  inequalities
(we assume for simplicity that $k_1\ge k_2$)
\begin{align*}
&k_1+k_2\ge k_\alpha \ge k_1-k_2\\
&\sum_{i\ge 3} k_i\ge k_\alpha\ge k_i-\sum_{j\not=i,\ j\ge 3} k_j,\ i\ge 3
\end{align*}
For the existence of such $k_\alpha$ we need only to show that
\begin{align*}
 &k_1+k_2\ge k_i-\sum_{j\not=i,\ j\ge 3} k_j,\ i\ge 3\\
&\sum_{i\ge 3} k_i\ge k_1-k_2
\end{align*}
but these are exactly conditions for $(k_1,\ldots, k_n)$ to satisfy triangle inequality.
\end{proof}

\subsubsection{Important maps}
\label{inj-map}
Every $\omicron\in {\rm Inv}\left(\Hil_{k_1}\otimes\cdots\otimes\Hil_{k_n}\right)$ may be uniquely written as
\[
 \omicron_{k_1 A_1 k_2 A_2\ldots k_n A_n}
=\sum_{k_{\alpha}}C^{k_\alpha A_\alpha}_{k_1 A_1 k_2 A_2}
\omicron^{k_\alpha}_{k_\alpha A_\alpha k_3 A_3\ldots k_n A_n},
\]
where $\omicron^{k_\alpha}\in {\rm Inv}\left(\Hil_{k_\alpha}\otimes\Hil_{k_3}\otimes\cdots\otimes\Hil_{k_n}\right)$. Summation is taken over such $k_\alpha$ that $(k_\alpha,k_1,k_2)$ and $(k_\alpha,k_3,\ldots,k_n)$ satisfy triangle inequality, $k_\alpha+k_1+k_2\in \N$.

This gives us decomposition of ${\rm Inv}\left(\Hil_{k_1}\otimes\cdots\otimes\Hil_{k_n}\right)$ into orthogonal subspaces
\[
 \oplus_\alpha \Hil_\alpha,
\]
where each $\Hil_\alpha$ is isomorphic to ${\rm Inv}\left(\Hil_{k_\alpha}\otimes\Hil_{k_3}\otimes\cdots\otimes\Hil_{k_n}\right)$.
Let us define maps which assign these partial isometries 
\[
Q_{k_\alpha}\colon  {\rm Inv}\left(\Hil_{k_1}\otimes\cdots\otimes\Hil_{k_n}\right)\rightarrow {\rm Inv}\left(\Hil_{k_\alpha}\otimes\Hil_{k_3}\otimes\cdots\otimes\Hil_{k_n}\right),\ \
Q_{k_\alpha}\omicron=\omicron^{k_\alpha}
\]
Adjoints to them are embeddings $Q^*_{k_\alpha}$.
\[
 Q^*_{k_\alpha}\left(\omicron^{k_\alpha}\right)_{k_1 A_1 k_2 A_2\ldots k_n A_n}
=C^{k_\alpha A_\alpha}_{k_1 A_1 k_2 A_2}
\omicron^{k_\alpha}_{k_\alpha A_\alpha k_3 A_3\ldots k_n A_n}.
\]
These maps are also well defined in a case that $\alpha$ does not occur in the decomposition $\oplus\Hil_\alpha$ but $(k_\alpha,k_1,k_2)$ satisfies triangle inequalities. Then the space ${\rm Inv}\left(\Hil_{k_\alpha}\otimes\Hil_{k_3}\otimes\cdots\otimes\Hil_{k_n}\right)$ is trivial and the maps $Q_{k_\alpha}$ and $Q^*_{k_\alpha}$ too.

Let us fix $(k_1,\ldots, k_n)$ and $(j^\pm_1,\ldots, j^\pm_n)$ satisfying triangle inequalities and such that $j^+_i+j^-_i=k_i$.
\begin{lm}\label{mapG}
Suppose $k_\alpha,j^\pm_\alpha$ are such that $j^+_\alpha+j^-_\alpha=k_\alpha$ and $(k_\alpha,k_1,k_2)$ and $(j^\pm_\alpha,j^\pm_1,j^\pm_2)$ satisfy triangle inequalities, $k_\alpha+k_1+k_2\in \N$, $j^\pm_\alpha+j^\pm_1+j^\pm_2\in \N$. Then there exists an operator 
\begin{align*}
G_{k_\alpha j^+_\alpha j^-_\alpha}\ \colon 
{\rm Inv}\big(\Hil_{j^+_\alpha}\otimes\Hil_{j^+_3}\otimes\cdots&\otimes\Hil_{j^+_n}\big)\otimes {\rm Inv}\big(\Hil_{j^-_\alpha}\otimes\Hil_{j^-_3}\cdots\otimes\Hil_{j^-_n}\big)
\rightarrow \\ &\rightarrow
{\rm Inv}\big(\Hil_{j^+_1}\otimes\cdots\otimes\Hil_{j^+_n}\big)\otimes
{\rm Inv}\big(\Hil_{j^-_1}\otimes\cdots\otimes\Hil_{j^-_n}\big) 
\end{align*}
such that for all $\omicron\in {\rm Inv}\left(\Hil_{k_\beta}\otimes\Hil_{k_3}\otimes\cdots\otimes\Hil_{k_n}\right)$ and $\phi\in {\rm Inv}\left(\Hil_{j^+_\alpha}\otimes\Hil_{j^+_3}\otimes\cdots\otimes\Hil_{j^+_n}\right)\otimes {\rm Inv}\left(\Hil_{j^-_\alpha}\otimes\Hil_{j^-_3}\cdots\otimes\Hil_{j^-_n}\right)$
\[
 \langle\iota_{k_1\ldots k_n}Q^*_{k_\beta}\omicron,\ G_{k_\alpha j^+_\alpha j^-_\alpha}\phi\rangle=
\left\{\begin{array}{ll}
\langle\iota_{k_\alpha k_3\dots k_n}\omicron,\ \phi\rangle, & k_\beta=k_\alpha\\ 
        0, & k_\beta>k_\alpha\\
*, & k_\beta<k_\alpha
       \end{array}\right.
\]
\end{lm}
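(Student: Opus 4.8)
The plan is to build $G_{k_\alpha j^+_\alpha j^-_\alpha}$ out of the two partial isometries that are the exact $+/-$ analogues of the maps $Q^*$ introduced above. On the $+$ side, coupling the legs $j^+_1,j^+_2$ into $j^+_\alpha$ gives an embedding $Q^{*,+}_{j^+_\alpha}\colon {\rm Inv}(\Hil_{j^+_\alpha}\otimes\Hil_{j^+_3}\otimes\cdots\otimes\Hil_{j^+_n})\to{\rm Inv}(\Hil_{j^+_1}\otimes\cdots\otimes\Hil_{j^+_n})$ via the Clebsch--Gordan $C^{j^+_\alpha}_{j^+_1 j^+_2}$, and likewise $Q^{*,-}_{j^-_\alpha}$ on the $-$ side; I would set $G_{k_\alpha j^+_\alpha j^-_\alpha}=\lambda\,(Q^{*,+}_{j^+_\alpha}\otimes Q^{*,-}_{j^-_\alpha})$ for a normalization constant $\lambda$ (depending only on the fixed data $k_1,k_2,j^\pm_1,j^\pm_2,k_\alpha,j^\pm_\alpha$) to be pinned down at the very end. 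Because $G\phi$ is by construction invariant on both factors, the projections $P$ inside $\iota_{k_1\ldots k_n}$ act as the identity on it and may be dropped from the scalar product; this is what makes the computation tractable.

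With the projections gone, $\langle\iota_{k_1\ldots k_n}Q^*_{k_\beta}\omicron,\,G_{k_\alpha j^+_\alpha j^-_\alpha}\phi\rangle$ becomes a single complete contraction of Clebsch--Gordan coefficients. On the legs $3,\ldots,n$ the splittings $C^{k_i}_{j^+_i j^-_i}$ are untouched and reassemble, together with $\omicron$ and $\phi$, into the reduced EPRL pattern. On legs $1,2$ one is left with the contraction of five coefficients: $C^{k_\beta}_{k_1 k_2}$ coming from $Q^*_{k_\beta}$, the two EPRL splittings $C^{k_1}_{j^+_1 j^-_1}$, $C^{k_2}_{j^+_2 j^-_2}$, and the two couplings $\overline{C^{j^+_\alpha}_{j^+_1 j^+_2}}$, $\overline{C^{j^-_\alpha}_{j^-_1 j^-_2}}$ coming from $G$. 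The next step is to recouple this block: summing over the six magnetic indices on legs $1,2$ it equals $\sqrt{(2k_1+1)(2k_2+1)(2j^+_\alpha+1)(2j^-_\alpha+1)}$ times the $9j$-symbol $\left\{\begin{array}{ccc}j^+_1&j^-_1&k_1\\ j^+_2&j^-_2&k_2\\ j^+_\alpha&j^-_\alpha&k_\beta\end{array}\right\}$ times the single coupling $C^{k_\beta}_{j^+_\alpha j^-_\alpha}$.

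Substituting this back shows that the whole scalar product equals $\lambda$ times that dimensional factor, times the $9j$-symbol, times $\langle\iota'\omicron,\phi\rangle$, where $\iota'$ is the reduced EPRL map on the legs $(\alpha,3,\ldots,n)$ but with leg $\alpha$ carrying the triple $(j^+_\alpha,j^-_\alpha,k_\beta)$. Triangularity now falls out of the $9j$-symbol: its third row $(j^+_\alpha,j^-_\alpha,k_\beta)$ must satisfy the triangle inequality, so the symbol, and hence the whole pairing, vanishes whenever $k_\beta>j^+_\alpha+j^-_\alpha=k_\alpha$, which is the case $k_\beta>k_\alpha$. For $k_\beta<k_\alpha$ there is no constraint and the value is left unspecified. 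Finally, for $k_\beta=k_\alpha$ the triple $(j^+_\alpha,j^-_\alpha,k_\alpha)$ is stretched, so $C^{k_\alpha}_{j^+_\alpha j^-_\alpha}$ is precisely the stretched coupling that the definition of $\iota_{k_\alpha k_3\ldots k_n}$ uses on its leg $\alpha$; thus $\iota'\omicron=\iota_{k_\alpha k_3\ldots k_n}\omicron$ and the pairing reduces to a fixed prefactor times $\langle\iota_{k_\alpha k_3\ldots k_n}\omicron,\phi\rangle$. Since all the triangle inequalities required for this stretched $9j$-symbol hold by hypothesis (the columns $(j^\pm_1,j^\pm_2,j^\pm_\alpha)$ and $(k_1,k_2,k_\alpha)$ by assumption, and the reduced legs by Lemma \ref{triangle} and Lemma \ref{triangle2}), the symbol is nonzero, and I would fix $\lambda$ so that this prefactor is exactly $1$, yielding the asserted identity.

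The main obstacle is the recoupling step together with the bookkeeping of normalizations: one must verify the five-coefficient contraction really collapses to the stated $9j$-symbol with the right dimensional weight, and that the stretched $9j$-symbol at $k_\beta=k_\alpha$ is genuinely nonzero, so that the single constant $\lambda$ can simultaneously normalize the $k_\beta=k_\alpha$ entry to $1$ while being harmless in the other cases. Everything else, namely dropping the projections, the reassembly on legs $3,\ldots,n$, and the triangularity, is then immediate from the triangle-inequality content of the $9j$-symbol and the lemmas already proved.
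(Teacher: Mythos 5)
Your construction coincides with the paper's: $G_{k_\alpha j^+_\alpha j^-_\alpha}$ is the contraction of $\phi$ with $C^{j^+_\alpha}_{j^+_1 j^+_2}$ and $C^{j^-_\alpha}_{j^-_1 j^-_2}$ up to a normalization constant, the projections inside $\iota_{k_1\ldots k_n}$ are dropped because $G\phi$ is invariant, the legs $3,\ldots,n$ reassemble into the reduced EPRL pairing, and everything hinges on the scalar obtained by closing the five Clebsch--Gordan coefficients on legs $1,2$ against $C^{k_\beta}_{j^+_\alpha j^-_\alpha}$. Your treatment of the case $k_\beta>k_\alpha$ (no admissible intertwiner for the triple $(j^+_\alpha,j^-_\alpha,k_\beta)$) and the identification of the $k_\beta=k_\alpha$ block with the splitting used by $\iota_{k_\alpha k_3\ldots k_n}$ are both correct and match the paper.

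The gap is exactly the point you yourself flag as the ``main obstacle'': the nonvanishing, at $k_\beta=k_\alpha$, of the proportionality constant, i.e.\ of the column-stretched $9j$-symbol
\[
\left\{\begin{array}{ccc}j^+_1&j^-_1&k_1\\ j^+_2&j^-_2&k_2\\ j^+_\alpha&j^-_\alpha&k_\alpha\end{array}\right\},\qquad k_i=j^+_i+j^-_i.
\]
Your stated justification --- that all the required triangle inequalities hold, hence the symbol is nonzero --- is not a valid inference: $6j$- and $9j$-symbols possess well-known nontrivial zeros at admissible arguments, so triangularity alone establishes nothing. Since this nonvanishing is the entire substance of the $k_\beta=k_\alpha$ case (the rest of the lemma is bookkeeping), it must actually be proved. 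The paper does so diagrammatically in subsection \ref{relint}: because every triple $(j^+_i,j^-_i,k_i)$ is stretched, each splitting $C^{k_i}_{j^+_i j^-_i}$ is just the symmetrizer $P^{k_i}$ restricted to $\Hil_{1/2}^{\otimes 2j^+_i}\otimes\Hil_{1/2}^{\otimes 2j^-_i}$, and the identity $P^{a+b+c}\circ\left(P^{a+b}\otimes{\mathbb I}\right)=P^{a+b+c}$ merges the $+$ and $-$ halves of the diagram into the standard three-valent intertwiner of Fig.~\ref{cint_0}, whose closed evaluation is manifestly nonzero. Alternatively you could invoke the explicit product formula for the triple-stretched $9j$-symbol, which is strictly positive when the triangle inequalities hold; either way, an argument specific to the stretched configuration is required, and supplying it is what turns your outline into a complete proof.
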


\begin{proof}
We define $G_{k_\alpha j^\pm_\alpha}$ as
\[
 G_{k_\alpha j^\pm_\alpha}\left(\phi\right)_{j^+_1 A_1\ldots j^+_n A_n j^-_1 B_1\ldots j^-_n \ldots B_n}=
\beta C^{j^+_\alpha A_\alpha}_{j^+_1 A_1 j^+_2 A_2}C^{j^-_\alpha B_\alpha}_{j^-_1 A_1 j^-_2 A_2}
\phi_{j^+_\alpha A_\alpha j^+_3 A_3\ldots j^-_\alpha B_\alpha j^-_3 B_3\ldots},
\]
with $\beta$ nonzero constant to be defined later.
Let us compute
\[
 \langle\iota_{k_1\ldots k_n}Q^*_{k_\beta}\omicron, G_{k_\alpha j^\pm_\alpha}\phi\rangle
\]
In the definition of $\iota$ one can skip projection because both $\phi$ and $G_{k_\alpha j^\pm_\alpha}\phi$ are invariants. Let us write explicitly $\langle\iota_{k_1\ldots k_n}Q^*_{k_\beta}\omicron,\ G_{k_\alpha j^\pm_\alpha}\phi\rangle$. We have
\begin{align*}
\omicron_{ k_\beta A \ldots k_n A_n} C^{k_\beta A}_{k_1 A_1 k_2 A_2}& C^{k_1 A_1}_{j^+_1 B_1 j^-_1 C_1}\cdots C^{k_n A_n}_{j^+_n B_n j^-_n C_n}\ \ \beta C^{j^+_1 B_1 j^+_2 B_2}_{j^+_\alpha B} C^{j^-_1 C_1 j^-_2 C_2}_{j^-_\alpha C}\phi^{j^+_\alpha B j^+_3 B_3\ldots j^-_\alpha C j^-_3 C_3\ldots}=\\
&=\beta C^{k_\beta A}_{k_1 A_1 k_2 A_2} C^{k_1 A_1}_{j^+_1 B_1 j^-_1 C_1} C^{k_2 A_2}_{j^+_2 B_2 j^-_2 C_2}C^{j^+_1 B_1 j^+_2 B_2}_{j^+_\alpha B} C^{j^-_1 C_1 j^-_2 C_2}_{j^-_\alpha C}\\ &\omicron_{k_\beta A \ldots k_n A_n}  C^{k_3 A_3}_{j^+_3 B_3 j^-_1 C_3}\cdots C^{k_n A_n}_{j^+_n B_n j^-_n C_n}\ \ \phi^{j^+_\alpha B j^+_3 B_3\ldots j^-_\alpha C j^-_3 C_3\ldots}.
\end{align*}
We need only to show that
\[
\beta C^{k_\beta A}_{k_1 A_1 k_2 A_2} C^{k_1 A_1}_{j^+_1 B_1 j^-_1 C_1} C^{k_2 A_2}_{j^+_2 B_2 j^-_2 C_2}C^{j^+_1 B_1 j^+_2 B_2}_{j^+_\alpha B} C^{j^-_1 C_1 j^-_2 C_2}_{j^-_\alpha C}=
\left\{\begin{array}{ll}
C^{k_\beta A}_{j^+_\alpha B j^-_\alpha C}, & k_\beta=j^+_\alpha+j^-_\alpha\\
0 & k_\beta>j^+_\alpha+j^-_\alpha.
\end{array}\right.
\]
The second equality is obvious because there exists no intertwiner if $k_\beta>j^+_\alpha+j^-_\alpha$.
The first will be proved in the next subsection \ref{relint}
\footnote{Although it seems to be standard, we include it for a sake of completeness.}.
\end{proof}

\subsubsection{Relation among intertwiners}\label{relint}
We know that 
$C^{k_\beta A}_{k_1 A_1 k_2 A_2} C^{k_1 A_1}_{j^+_1 B_1 j^-_1 C_1} C^{k_2 A_2}_{j^+_2 B_2 j^-_2 C_2}C^{j^+_1 B_1 j^+_2 B_2}_{j^+_\alpha B} C^{j^-_1 C_1 j^-_2 C_2}_{j^-_\alpha C}$ is proportional to $C^{k_\beta A}_{j^+_\alpha B j^-_\alpha C}$. In order to prove that the factor of proportionality is nonzero we will show that
\[
C^{k_\alpha A}_{k_1 A_1 k_2 A_2} C^{k_1 A_1}_{j^+_1 B_1 j^-_1 C_1} C^{k_2 A_2}_{j^+_2 B_2 j^-_2 C_2}C^{j^+_1 B_1 j^+_2 B_2}_{j^+_\alpha B} C^{j^-_1 C_1 j^-_2 C_2}_{j^-_\alpha C}\ \ C_{k_\alpha A}^{j^+_\alpha B j^-_\alpha C}\not=0
\]
and that would be $\beta^{-1}$.
In fact it is enough to show that the intertwiner
\[
C^{k_1 A_1}_{j^+_1 B_1 j^-_1 C_1} C^{k_2 A_2}_{j^+_2 B_2 j^-_2 C_2}C^{j^+_1 B_1 j^+_2 B_2}_{j^+_\alpha B} C^{j^-_1 C_1 j^-_2 C_2}_{j^-_\alpha C}C_{k_\alpha A}^{j^+_\alpha B j^-_\alpha C}
\]
is not equal zero or equivalently the same for
\[
C^{k_1 A_1}_{j^+_1 B_1 j^-_1 C_1} C^{j^+_2 B_2 j^-_2 C_2}_{k_2 A_2}C^{j^+_1 B_1 }_{j^+_2 B_2j^+_\alpha B} C^{j^-_1 C_1 }_{j^-_2 C_2j^-_\alpha C}C_{k_\alpha A}^{j^+_\alpha B j^-_\alpha C} 
\]
We only sketch the proof. First of all, we remind some facts about intertwiners and diagrammatic notation.

Let $P^k$ stands for projection onto symmetric subspace in $\Hil_{1/2}^{\otimes 2k}$ equivalent to $\Hil_k$ ($k$ is a half natural number).
\[
 P^k\colon \Hil_{1/2}^{\otimes 2k}\rightarrow \Hil_{1/2}^{\otimes 2k}.
\]
In this subsection we regard $\Hil_k$ as this subspace of $\Hil_{1/2}^{\otimes 2k}$. Let us also denote the canonical map
$\epsilon\colon \C \mapsto \Hil_{1/2}\otimes \Hil_{1/2}$.

\begin{quote}
The intertwiner $C^{k_1k_2k_3}\colon \C\rightarrow \Hil_{k_1}\otimes \Hil_{k_2}\otimes\Hil_{k_3}$ is {\it proportional} to $P^{k_1}\otimes P^{k_2}\otimes P^{k_3}\epsilon^{\otimes 2k_1+2k_2+2k_3}$.
\end{quote}
In the diagrammatic language this can be depicted as on figure \ref{cint_0}. We skip the index $k$ in $P^k$ on the diagrams for notations' brevity. The line with symbol $k$ denotes $\Hil_{1/2}^{\otimes 2k}$.
\begin{figure}[ht!]
  \centering
\begin{picture}(4.5,3.8)
\put(0,0.5){\usebox{\pudlo}}
\put(3,0.5){\usebox{\pudlo}}
\put(1.5,2.5){\usebox{\pudlo}}
\gora{0.5}{1.3}{1}{1.8}{2}{2.5}{$k_{12}$}
\gora{4}{1.3}{3.5}{1.8}{2.5}{2.5}{$k_{13}$}
\gora{1}{1.3}{2.2}{1.6}{3.5}{1.3}{$k_{23}$}
\gora{0.7}{0}{0.7}{0.1}{0.7}{0.5}{$k_2$}
\gora{3.7}{0}{3.7}{0.1}{3.7}{0.5}{$k_3$}
\gora{2.3}{3.3}{2.3}{3.4}{2.3}{3.8}{$k_1$}
\end{picture}
\caption{An intertwiner {\it proportional} to $C^{k_1}_{k_2k_3}$, $k_{12}=k_1+k_2-k_3$ and etc.}
{\label{cint_0}}
\end{figure}
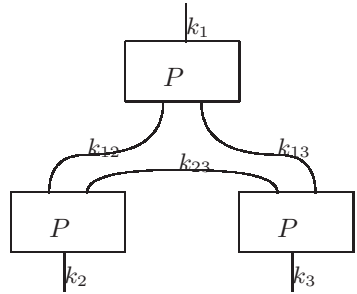

We have to notice important properties,that in diagrammatic language is shown on figures  \ref{cint_1} and \ref{cint_2}.

\begin{figure}[ht!]
\begin{picture}(2.5,2)
\gora{0}{0}{0}{1}{0}{2}{$k_1$}
\gora{0.8}{0}{0.8}{1}{0.8}{2}{$k_2$}
\put(1.6,1){$=$}
\gora{2.2}{0}{2.2}{1}{2.2}{2}{$k_1+k_2$}
\end{picture}
\caption{An equality between $\Hil_{1/2}^{2k_1}\otimes\Hil_{1/2}^{2k_2}$ and $\Hil_{1/2}^{2k_1+2k_2}$.}
{\label{cint_1}}
\end{figure}
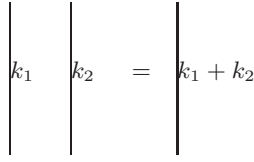
\begin{figure}[ht!]
\centering
\begin{picture}(4,4.5)
\gora{0.5}{0}{0.5}{0.3}{0.5}{1}{$k_1$}
\gora{0.5}{1.8}{0.5}{2.1}{0.5}{2.8}{$k_1$}
\gora{0.5}{3.6}{0.5}{4}{0.5}{4.5}{$k_1$}
\gora{0.8}{1.8}{0.8}{2.1}{0.8}{2.8}{$k_2$}
\gora{0.8}{0}{0.8}{0.3}{0.8}{1}{$k_2$}
\gora{0.8}{3.6}{0.8}{4}{0.8}{4.5}{$k_2$}
\gora{1.1}{0}{1.1}{0.3}{1.1}{1}{$k_3$}
\gora{1.1}{1.8}{1.8}{2.4}{2.2}{4.5}{$k_3$}
\put(0,1){\usebox{\pudlo}}
\put(0,2.8){\usebox{\pudlo}}
\put(2.2,2.2){$=$}
\gora{3.5}{1}{3.5}{1.3}{3.5}{2}{$k_1$}
\gora{3.8}{1}{3.8}{1.3}{3.8}{2}{$k_2$}
\gora{4.1}{1}{4.1}{1.3}{4.1}{2}{$k_3$}
\gora{3.5}{2.8}{3.5}{3.1}{3.5}{3.8}{$k_1$}
\gora{3.8}{2.8}{3.8}{3.1}{3.8}{3.8}{$k_2$}
\gora{4.1}{2.8}{4.1}{3.1}{4.1}{3.8}{$k_3$}
\put(3,2){\usebox{\pudlo}}
\end{picture}
\caption{An equality $P^{k_1+k_2+k_3}\circ P^{k_1+k_2}\otimes {\mathbb I}=P^{k_1+k_2+k_3}$.}
{\label{cint_2}}
\end{figure}
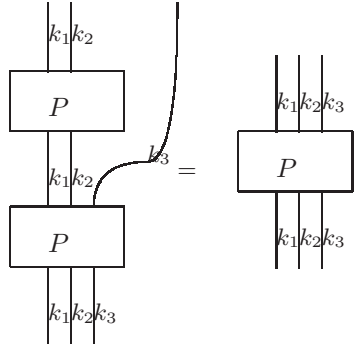

Our intertwiner can be written as shown on the figure \ref{cint_4}.

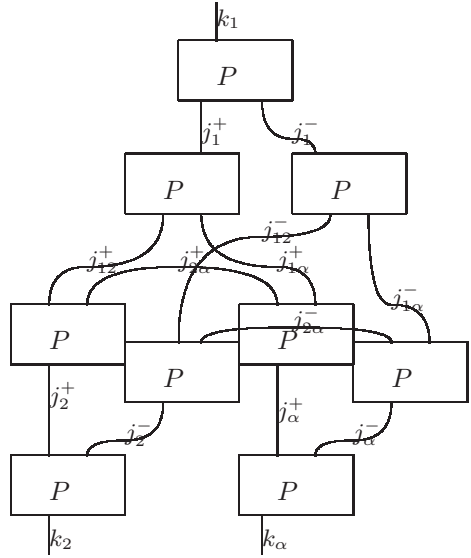
\begin{figure}[ht!]
  \centering
\begin{picture}(5.5,7.3)
\put(0,2.5){\usebox{\pudlo}}
\put(3,2.5){\usebox{\pudlo}}
\put(1.5,4.5){\usebox{\pudlo}}
\gora{0.5}{3.3}{1}{3.8}{2}{4.5}{$j^+_{12}$}
\gora{4}{3.3}{3.5}{3.8}{2.5}{4.5}{$j^+_{1\alpha}$}
\gora{1}{3.3}{2.2}{3.8}{3.5}{3.3}{$j^+_{2\alpha}$}
\put(1.5,2){\usebox{\pudlo}}
\put(4.5,2){\usebox{\pudlo}}
\put(3.7,4.5){\usebox{\pudlo}}
\gora{2.2}{2.8}{3.3}{4.2}{4.2}{4.5}{$j^-_{12}$}
\gora{5.5}{2.8}{5}{3.3}{4.7}{4.5}{$j^-_{1\alpha}$}
\gora{2.5}{2.8}{3.7}{3}{5}{2.8}{$j^-_{2\alpha}$}
\put(0,0.5){\usebox{\pudlo}}
\gora{0.5}{1.3}{0.5}{2}{0.5}{2.5}{$j^+_{2}$}
\gora{1}{1.3}{1.5}{1.5}{2}{2}{$j^-_{2}$}
\put(3,0.5){\usebox{\pudlo}}
\gora{3.5}{1.3}{3.5}{1.8}{3.5}{2.5}{$j^+_{\alpha}$}
\gora{4}{1.3}{4.5}{1.5}{5}{2}{$j^-_{\alpha}$}
\put(2.2,6){\usebox{\pudlo}}
\gora{2.5}{5.3}{2.5}{5.5}{2.5}{6}{$j^+_{1}$}
\gora{4}{5.3}{3.7}{5.5}{3.3}{6}{$j^-_{1}$}
\gora{0.5}{0}{0.5}{0.1}{0.5}{0.5}{$k_{2}$}
\gora{3.3}{0}{3.3}{0.1}{3.3}{0.5}{$k_{\alpha}$}
\gora{2.7}{6.8}{2.7}{7}{2.7}{7.3}{$k_1$}
\end{picture}
\caption{Intertwiner proportional to 
$C^{k_1 A_1}_{j^+_1 B_1 j^-_1 C_1} C^{j^+_2 B_2 j^-_2 C_2}_{k_2 A_2}C^{j^+_1 B_1 }_{j^+_2 B_2j^+_\alpha B} C^{j^-_1 C_1 }_{j^-_2 C_2j^-_\alpha C}C_{k_\alpha A}^{j^+_\alpha B j^-_\alpha C}$.}
{\label{cint_4}}
\end{figure}

Now using properties mentioned earlier 
we see that one can merge $j^+_{ij}$ with $j^-_{ij}$ into $k_{ij}$ and that intertwiner is equal to intertwiner shown on the figure \ref{cint_0} and is {\it nonzero}.

\subsubsection{Inductive steps}
\label{inj-proof}
Induction starts with $n=1$. In this case $k_1=0$ and so also $j^\pm_1=0$. The map
$\iota_{0}=C^0_{00}\colon \C\rightarrow \C\otimes\C$ is the identity.

Suppose now, that we have just proved {\bf Hyp $n-1$}.

In the decomposition of given $\omicron\in {\rm Inv}\left(\Hil_{k_1}\otimes\cdots\otimes\Hil_{k_n}\right)$ into subspaces $\Hil_\alpha$ we choose minimal $k_\alpha$ such that $Q_{k_\alpha}\omicron$ is nonzero. We know, by lemmas \ref{triangle} and \ref{triangle2} that for either
\[
(j^+_\alpha,j^-_\alpha)=\left(\clpn{\frac{1+\gamma}{2}k_\alpha}_+,\clpn{\frac{1-\gamma}{2}k_\alpha}_-\right)
\] 
or for
\[
(j^+_\alpha,j^-_\alpha)=\left(\clpn{\frac{1+\gamma}{2}k_\alpha}_-,\clpn{\frac{1-\gamma}{2}k_\alpha}_+\right)
\]
all necessary assumptions of lemma \ref{mapG} are satisfied. From the {\bf Hyp $n-1$} for the sequences $(k_\alpha,k_3,\ldots k_n)$, $(j^\pm_\alpha,j^\pm_3,\ldots j^\pm_n)$ we know that there exists \[
\phi\in {\rm Inv}\left(\Hil_{j^+_\alpha}\otimes\cdots\otimes\Hil_{j^+_n}\right)
\otimes {\rm Inv}\left(\Hil_{j^-_\alpha}\otimes\cdots\otimes\Hil_{j^-_n}\right)
\]
 such that
\[
 \left\langle\iota_{k_\alpha k_3\ldots k_n} Q_{k_\alpha}\omicron,\ \phi\right\rangle\not=0.
\]
We have
\[
 \left\langle\iota_{k_1\ldots k_n}\omicron,\ G_{k_\alpha j^\pm_\alpha}\phi\right\rangle=
\sum_{k_\beta\ge k_\alpha}\left\langle\iota_{k_1\ldots k_n} Q^*_{k_\beta}Q_{k_\beta}\omicron,\ G_{k_\alpha j^\pm_\alpha}\phi\right\rangle=\left\langle\iota_{k_\alpha k_3\ldots k_n} Q_{k_\alpha}\omicron,\ \phi\right\rangle\not=0.
\]
This finishes inductive step and the proof.
\section{Short Discussion}
We studied in this paper properties of the solutions to the EPRL simplicity constraints which were derived in \cite{EPRL}. We also pointed out two different
possibilities of defining the partition function out of them. Our definition
is (\ref{ZP}). It uses only the subspace of the SO(4) intertwiners which solve the
EPRL simplicity constraint. The comparison and contrast between our definition
and that of \cite{EPRL} is provided by (\ref{pffinal}) and the comments which
follow that equality. The difference follows from the fact proven in Section
\ref{example} above, that the EPRL map that is not isometric. The example
considered in that section gives also quantitative idea of the difference. The
question of which definition of the partition function is correct can not be
answered at this stage. Finally, we studied the ``size of the space of the EPRL
solutions''. We have shown that the EPRL map does not kill those SO(3) 
intertwiners, which are mapped into SO(3)$\times$SO(3) intertwiners. The proof is presented in detail in   Section \ref{Sec_inject}.
\bigskip

\noindent{\bf Acknowledgments} We would like to thank John Barrett, Jonathan Engle
for coming to Warsaw and delivering  lectures on the SFM. JL acknowledges the 
conversations with Laurent Freidel, and exchange of e-mails with John Baez, Carlo Rovelli, Roberto Pereira and Etera Livine. MK would like to thank Jonathan Engle from Albert Einstein Institute in Potsdam and Carlo Rovelli, Matteo Smerlak from Centre de Physique Theorique de Luminy for discussions and hospitality during his visits at their institutes. The work was partially supported by the Polish Ministerstwo Nauki i Szkolnictwa Wyzszego grants 182/NQGG/
2008/0,  2007-2010 research project N202 n081 32/1844, the National Science
Foundation (NSF) grant PHY-0456913, by the Foundation for Polish Science grant
Master and a Travel Grant from the QG research networking programme of the European Science Foundation.

\end{document}